\documentclass[
    twocolumn,
    10pt,
    aps,
    pra,
    superscriptaddress,
    floatfix,
    citeautoscript,
    longbibliography
]{revtex4-2}

\bibliographystyle{apsrev4-2}

\usepackage[T1]{fontenc}
\usepackage[dvipsnames]{xcolor}
\usepackage{amsthm}
\usepackage{thmtools}
\usepackage{mathtools}
\usepackage{braket}
\usepackage{bm} 
\usepackage{bbm} 
\usepackage{dsfont}
\usepackage{amsmath}
\usepackage{amssymb}
\usepackage{enumitem}
\usepackage{float}
\usepackage[colorlinks]{hyperref}
\hypersetup{
    breaklinks  = true,
    colorlinks  = true,
    citecolor   = PineGreen,
    linkcolor   = MidnightBlue,
    urlcolor    = PineGreen
}
\usepackage{inputenc}
\usepackage{tikz}
\usepackage{xurl}

\usepackage{etoolbox}
\apptocmd{\sloppy}{\hbadness 10000\relax}{}{}

\usepackage{silence}
\WarningFilter{revtex4-2}{Repair the float}

        

\usepackage{amsfonts,dsfont}
\usepackage{mathrsfs}



\DeclareMathOperator{\opSU}{SU}





\newcommand{\RomN}[1]{\uppercase\expandafter{\romannumeral#1}}

\newcommand{\of}[1]{\left(#1\right)}

\newcommand{\eof}[1]{\left[#1\right]}
\newcommand{\vof}[1]{\left\vert #1 \right\vert}
\newcommand{\nof}[1]{\left\Vert #1 \right\Vert}


\newcommand{\tbE}{\stackrel{!}{=}}




\newcommand{\twoMat}[4]{\begin{pmatrix} #1 & #2 \\ #3 & #4\end{pmatrix}}

\newcommand{\twoArr}[2]{\begin{pmatrix} #1 \\ #2 \end{pmatrix}}


\newcommand{\dd}{\, \text{\upshape d}}

\newcommand{\bR}{\mathbb{R}}

\newcommand{\cT}{\mathscr{T}}





\newcommand{\mIm}{\mathrm{i}}

\makeatletter

\makeatother 

\usepackage[colorlinks]{hyperref}
\hypersetup{
    colorlinks  = true,
    citecolor   = PineGreen,
    linkcolor   = MidnightBlue,
    urlcolor    = PineGreen
}

\definecolor{fuchsia}{rgb}{0.96, 0.0, 0.63}
\definecolor{dur}{rgb}{0.6, 0, 0.1}

\definecolor{gnuplot_green}{HTML}{009E73}
\definecolor{gnuplot_purple}{HTML}{9400D3}


\declaretheorem[style=plain]{theorem}

\declaretheorem[style=plain]{lemma}
\declaretheorem[style=plain]{remark}

\declaretheorem[style=plain]{definition}
\declaretheorem[style=plain]{example}

\AtBeginDocument{}
\AtBeginDocument{}

\begin{document}

\title{Sub-Riemannian geometry of measurement based quantum computation}

\date{\today}

\author{Lukas \surname{Hantzko}}
\email{lukas.hantzko@itp.uni-hannover.de}
\affiliation{Institut für Theoretische Physik, Leibniz Universit{\"a}t Hannover, Appelstr. 2, 30167 Hannover, Germany}

\author{Arnab \surname{Adhikary}}
\email{arnab.adhikary@itp.uni-hannover.de}
\affiliation{Institut für Theoretische Physik, Leibniz Universit{\"a}t Hannover, Appelstr. 2, 30167 Hannover, Germany}
\affiliation{Department of Physics and Astronomy, University of British Columbia, 
Vancouver, BC V6T 1Z1, Canada}
\affiliation{Stewart Blusson Quantum Matter Institute, University of British Columbia, Vancouver, BC V6T 1Z4, Canada}

\author{Robert \surname{Raussendorf}}
\email{robert.raussendorf@itp.uni-hannover.de}
\affiliation{Institut für Theoretische Physik, Leibniz Universit{\"a}t Hannover, Appelstr. 2, 30167 Hannover, Germany}
\affiliation{Stewart Blusson Quantum Matter Institute, University of British Columbia, Vancouver, BC V6T 1Z4, Canada}

\begin{abstract}

The computational power of quantum phases of matter with symmetry can be accessed through local measurements, but what is the most efficient way of doing so? In this work, we show that minimizing operational resources in measurement-based quantum computation (MBQC) on subsystem symmetric resource states amounts to solving a sub-Riemannian geodesic problem between the identity and the target logical unitary. This reveals a geometric structure underlying MBQC and offers a principled route to optimize quantum processing in computational phases.

\end{abstract}

\maketitle

\section{Introduction}

Quantum computation is an interplay between the continuous and the discrete. For example, the states of any given number of qubits form a continuous set, but their measurement yields discrete outcomes. Universality refers to the continuous groups $\mathrm{SU}(2^n)$, but fault tolerance only provides discrete sets of protected encoded gates \cite{eastinRestrictionsTransversalEncoded2009}. Fortunately, these gates can be composed to circuits dense in $\mathrm{SU}(2^n)$, as demonstrated by the Solovay-Kitaev theorem \cite{bonesteelBraidTopologiesQuantum2005}.

The efficiency of quantum computation is typically based on discrete measures, such as qubit and gate counts, and circuit depth. Yet, a continuous notion also exists \cite{nielsenOptimalControlGeometry2006,nielsenQuantumComputationGeometry2006}, due to Nielsen {\em{et al.}}, and is called geometric complexity. They consider the optimization of execution time for unitary gates, as generated by time-dependent Hamiltonians. Hard-to-implement terms in the Hamiltonian---for example terms that correspond to many-body interactions---are eliminated as generators of gates. The resulting control theory points to sub-Riemannian geometry. In fact, there are numerous examples in optimal quantum control theory where sub-Riemannian geometries are helpful. Most prominent is the implementation of time optimal quantum gates in specific physical scenarios \cite{khanejaTimeOptimalControl2001,khanejaSubRiemannianGeometryTime2002}.

Measurement-based quantum computation (MBQC) \cite{raussendorfOneWayQuantumComputer2001}, in which the process of quantum computation is driven by local measurements, appears to live entirely on the discrete side. For comparison, in the circuit model, continuity arises through unitary evolution generated by suitable Hamiltonians. Quantum gates gradually accumulate over an interaction time  $T$.  The MBQC-simulation of such gates is enacted by projective measurement, which is a discontinuous operation. While the interaction time $T$ of the circuit model has a counterpart in MBQC---the angle that defines the basis of measurement---there is no MBQC-counterpart to the continuous passing of time from 0 to $T$. It therefore seems that the differential-geometric notion of quantum computational efficiency has no bearing on MBQC. 

The opposite is the case, as we explain in this paper. The differential-geometric formalism does apply, and in fact MBQC is a strong case for it: the generators of the executable gates are rigidly specified by symmetry action. There is neither room nor need for a judgement as to which generators of unitaries are easy to implement and which ones are hard. Generators are either permitted or forbidden, and the permitted ones are all equal.

The linking element between differential geometry and MBQC is the phenomenon of computational phases of quantum matter \cite{bartlettQuantumComputationalRenormalization2010,dohertyIdentifyingPhasesQuantum2009,miyakeQuantumComputationEdge2010,elseSymmetryProtectedPhasesMeasurementBased2012,millerResourceQualitySymmetryProtected2015}. It signifies that, in the presence of a suitable symmetry, the computational power of MBQC resource states is uniform across physical phases. Whichever resource state is picked from a given symmetry-protected or symmetry-enriched phase, the set of achievable quantum algorithms is the same. This uniform power may amount to universality or nil, or something in-between, depending on the phase in question.

While computational power is uniform, computational efficiency is not. Moving away from the center of a phase towards its boundary makes MBQC progressively harder. More qubits of the resource state must be spent on every logical operation. The reason for this behaviour is the following: Denote by $\alpha$ the angle between the measurement basis at a given site $i$ and the symmetry-respecting basis (the basis local to $i$ that is element-wise invariant under the symmetry). This measurement implements a logical transformation ${\cal{T}}_i(\alpha)$ which, away from special points in the phase, is not unitary. Rather, it is a noisy channel. The deviation of ${\cal{T}}_i(\alpha)$ from the identity operation is $O(\alpha)$, and the deviation from unitarity is $O(\alpha^2)$ \cite{raussendorfComputationallyUniversalPhase2019}. Therefore, the strategy to achieve a unitary rotation of finite angle $\alpha$ with high accuracy is to split it into $N$ rotations about an angle $\alpha/N$ each. The error resulting from this procedure is $\sim (\alpha/N)^2 N = \alpha^2/N$. It can be made arbitrarily small by increasing $N$, at the expense of consuming more particles of the resource state. 

As the splitting number $N$ is made large, the individual logical channels ${\cal{T}}_i(\alpha/N)$ approach the identity. This invites the differential calculus of \cite{nielsenQuantumComputationGeometry2006}.\smallskip

The technical result that forms the starting point for our analysis, as summarized above, is Corollary~1 of \cite{raussendorfMeasurementbasedQuantumComputation2023} (1D) / Theorem~3 of \cite{herringerMeasurementBasedQuantumComputation2025} (2D). It says that the error $\epsilon$ of an $N$-way split rotation about an angle $\alpha$ is
\begin{equation}\label{scal}
\epsilon \leq \frac{\alpha^2}{N}\frac{1-\sigma^2}{\sigma^2}.
\end{equation}
Therein, $\sigma$, $|\sigma|\leq 1$, is an order parameter that depends on the resource state. 

In addition to its computational significance, the order parameter $\sigma$ has physical meaning. In 1D, the  physical order conducive to MBQC is string order \cite{raussendorfMeasurementbasedQuantumComputation2023}, and correspondingly $\sigma$ is a standard string order parameter. For example, the 1D cluster scenario has a $\mathbb{Z}_2\times \mathbb{Z}_2$ symmetry resulting in two symmetry-protected phases---the non-trivial cluster phase and the trivial phase around the product state $\bigotimes_i |+\rangle_i$ \cite{elseSymmetryProtectedPhasesMeasurementBased2012}. At the cluster point, the string order parameter is  $\sigma=1$. Hence there is no logical error at all. Throughout the cluster phase, $\sigma>0$; hence any error $\epsilon>0$ can be achieved by splitting sufficiently many times $N$. All across the trivial phase the string order parameter vanishes, $\sigma = 0$, and the technique of approximating unitaries by splitting breaks down. The trivial phase has no computational power.

In 2D, the physical orders known to be conducive to MBQC are subsystem symmetry-protected topological and subsystem symmetry-enriched topological physical order; and the corresponding computational order parameters $\sigma$ witness those orders. Computational universality can arise in both scenarios; see \cite{raussendorfComputationallyUniversalPhase2019,devakulUniversalQuantumComputation2018,stephenSubsystemSymmetriesQuantum2019,danielComputationalUniversalitySymmetryprotected2020} and \cite{herringerMeasurementBasedQuantumComputation2025}, respectively.
\smallskip

In the present work, we generalize  Eq.~(\ref{scal}), the fundamental scaling relation for computational phases of quantum matter, from single rotations to entire quantum circuits. Under this generalization, Eq.~(\ref{scal}) remains intact, with the rotation angle $\alpha$ replaced by the Carnot-Caratheodory distance of the unitary evolution simulated; see Theorem~(\ref{thm:implemented_error}) below. In this way, the differential-geometric calculus of quantum computational efficiency \cite{nielsenQuantumComputationGeometry2006} is introduced to measurement-based quantum computation. 

This establishes a rigorous framework for minimizing resource consumption in computational phases of quantum matter. The straightforward approach, namely the gate-wise MBQC-simulation of quantum circuits, is typically unnecessarily costly---for two reasons:
\begin{itemize}
\item[(i)]{The MBQC-native gate sets often have little to do with standard circuit model gate sets, hence simulating the latter with the former causes overhead.}
\item[(ii)]{In computational phases of quantum matter, where individual logical operations need to be close to the identity, it may be more efficient for large gates to `bleed into another' rather than being applied separately in succession.}
\end{itemize}
As an example for (i), the native MBQC gate set for the 2D cluster phase is $\exp(-i\alpha_1/2\, Z_i)$,   $\exp(-i\alpha_2/2\, X_i)$, $\exp(-i\alpha_3/2\, Z_{i-1}X_iZ_{i+1})$, $\exp(-i\alpha_4/2\, X_{i-1}Z_iX_{i+1})$, $\exp(-i\alpha_5/2\, Z_{i-2}X_{i-1}Z_iX_{i+1}Z_{i+2})$, .. \cite{mantriUniversalityQuantumComputation2017} --- not very suggestive a gate set  from the circuit model perspective. Ref \cite{raussendorfSymmetryprotectedTopologicalPhases2017} gives a  first indication that MBQC-native gate sets  can be useful for improving computational efficiency.

An example for (ii) is the $Y$-rotation described in Sec.~\ref{sec:example}; see Fig.~\ref{fig:geodesics} in particular.\smallskip

Our main result, Theorem \ref{thm:implemented_error}, puts the discussion of efficiency in computational phases of matter on a rigorous and versatile basis. It naturally incorporates the MBQC-native gate sets  and continuous computational paths.\medskip

The remainder of this paper is organized as follows. In Sec. \ref{sec:preliminaries}) we review from \cite{raussendorfMeasurementbasedQuantumComputation2023}  the noise model for logical operations in computational phases of quantum matter, and define sub-Riemannian geodesics. With these notions, in Sec. \ref{sec:result} we formulate our main theorem (Theorem \ref{thm:implemented_error}) that extends the error bound for MBQC-implemented gates to arbitrary unitaries. In Sec.~\ref{sec:methods} we prove Theorem~\ref{thm:implemented_error}, and illustrate it through the example of $\mathrm{SU}(2)$. Sec.~\ref{sec:concl} is our Conclusion and Outlook.

\section{Preliminaries}
\label{sec:preliminaries}

In computational phases of quantum matter, non-trivial gates are implemented via measurements in a rotated basis away from the symmetry respecting axis. In case of an ideal resource state---a central state in computational phase considered---each of these non-trivial measurements $M_i$, $i = 1, \dots N$ implements a local target unitary $U_i(\alpha_i) = \mathrm{e}^{\mathrm{i} \alpha_i g_i}$ with $g_i \subset G_i$ the set of locally executable generators and $\alpha_i$ the rotation angle. The set $\mathcal{G} = \bigcup_{i=1}^N G_i$ of executable generators is given by all locally executable generators. The target unitary is implemented as a sequence of rotations $U_i = \mathrm{e}^{\mathrm{i} \alpha_i g_i}$ with angles $\alpha_i$, generators $g_i$ and resulting unitary gate $U = \prod_{i=1}^N U_i$.

Away from the perfect resource states, deeper into the phase, the effect of the local measurements remains a product of quantum channels $C = \prod_{i=1}^N C_i$, under the additional assumption that the symmetry-breaking measurements are far enough apart (longer than the correlation length $\zeta$ of the resource state). The individual logical channels then take the form (see \cite[p. 23]{raussendorfMeasurementbasedQuantumComputation2023})
\begin{equation}\label{chan} 
C_i(\beta_i) = \frac{1 + \sigma}{2} [U_i(\beta_i)] + \frac{1- \sigma}{2} [U_i(-\beta_i)].
\end{equation}
Here $[U]$ denotes the map (super operator) $\rho \mapsto U \rho U^\dagger$. For simplicity, we assume that the computational order parameter $\sigma$ is uniform across all sites. 

The leading effect of non-unit order parameter $\sigma$ is to reduce the rotation angle $\alpha$ of the operation by a factor of $\sigma$, $\alpha_i = \sigma \beta_i$, and we have to adjust the measurement angle $\beta_i$ accordingly. The resulting operation is not exactly unitary, and the resulting error is
$$\epsilon_i = \Vert C_i(\beta_i) - [U_i(\alpha_i)]\Vert_\diamond. $$
The channels $C_i(\beta_i)$ form the sequence of implemented logical operations.
The total error of this implementation is the difference $\mathcal{E} = \prod_{i=1}^N C_i(\beta_i) - \prod_{i=1}^N [U_i(\alpha_i)]$. We measure the errors by their diamond norm and refer to it as the error
$$\epsilon = \Vert \mathcal{E} \Vert_\diamond. $$

In the present setting, we use sub-Riemannian geometry in contrast to Riemannian geometry, because only a small and fixed set of generators is available for the computation. The set of generators is restricted by the computational phase and its symmetries.

If the Lie algebra generated by $\mathcal{G}$ is $\mathfrak{su}(2^n)$, we know that we can execute all unitaries $U\in \mathrm{SU}(2^n)$. 
This is due to Chows Theorem \cite{rovelliChowTheoremStructure2024}, a result from sub-Riemannian geometry applied to the special unitary group $\mathrm{SU}(2^n)$. If $\mathcal{G}$ is not bracket-generating, it generates a Lie subalgebra. The resulting circuits then form some closed Lie subgroup of $\mathrm{SU}(2^n)$.\smallskip

In the following we want to find an error bound with the help of sub-Riemannian geodesics. Those are defined via the following properties:

\begin{definition}[Sub-Riemannian Geodesics, Distance]
\label{definition:SubRiemannianGeodesic}
    Let $\mathcal{G}$ be a set of $n$-qubit Pauli operators. Let $A$ the Lie hull of  the executable generators $\mathcal{G}$ and $G$ the corresponding closed Lie subgroup.
    Then a horizontal curve is a curve $C: [0,1] \to G$ in the unitary group with
    $$ \dot{C}(t) = C(t) c(t), \; c(t) \in \mathrm{span}(\mathrm{i}\mathcal{G}). $$
    Therein, the control $c(t)$ is a linear combination 
    $$ c(t) = \mathrm{i} \sum_{g \in \mathcal{G}} c_g(t) g$$
    of executable generators. The functions $c_g$ are called control functions for the generator $g \in \mathcal{G}$ and form a vector $\vec{c}(t) = (c_g)_{g \in \mathcal{G}}$.
    Given $U \in G$ a target unitary, then a connecting horizontal curve additionally satisfies the boundary conditions
    $$ C(0) = e, \; C(1) = U. $$
    A connecting horizontal curve $C$ is called a geodesic if its length (sub-Riemannian arclength)
    $$ L(C) = \int_0^1 \vert \vec{c}(t) \vert \dd t = \int_0^1 \sqrt{\sum_{g\in\mathcal{G}} c_g^2(t)} \;\dd t $$
    is locally minimal. If it is globally minimal, it is called a minimizing curve. Alternatively one can interpret the length as runtime.
    
    For a minimizing curve $C$, the sub-Riemannian arclength $L(C)$ is the sub-Riemannian (Carnot-Caratheodory) distance $\dd_{\mathrm{CC}}(e,U)$ of $e$ and $U$.
\end{definition}

With these assumptions, minimizing curves exist for all $U$, since horizontal connecting curves exist. In general they are not unique, which can also be seen in the simple example \ref{sec:example}.

\section{Result}
\label{sec:result}

In the following theorem, we generalize the result of \cite[Theorem 1]{adhikaryCounterintuitiveEfficientRegimes2024}. Whereas the existing theorem refers only to unitaries generated by a fixed Pauli operator, the present result refers to all unitaries reachable. 

\begin{theorem}[Implemented Error]
\label{thm:implemented_error}
    Consider a resource state with a translation invariant computational order parameter $\sigma$ and executable generators $\mathcal{G}$.
    Let $A$ be the Lie hull of  the executable generators $\mathcal{G}$ and $G$ be the corresponding closed Lie subgroup.
    Let $N$ be the number of sites with non-trivial measurement bases; i.e., the splitting number. Further assume that the distance between symmetry breaking measurements is much larger than the correlation length, such that their effects are uncorrelated.\\
    Then, the unitary $U \in G$ can be executed on that resource state with the following error bound, 
    \begin{equation}\label{Ebound}
    \epsilon \leq \frac{1}{N} \of{\frac{1}{\sigma^2} - 1} \dd^2_{\mathrm{CC}}(e,U) + \mathcal{O}(N^{-2}).
    \end{equation}
    Therein, $\dd_{\mathrm{CC}}(e,U)$ is the sub-Riemannian distance (Def. \ref{definition:SubRiemannianGeodesic}) of $U$ to the identity $e$.
\end{theorem}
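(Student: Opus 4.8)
The plan is to discretize a minimizing sub-Riemannian geodesic from $e$ to $U$ and then apply the known single-channel error estimate site-by-site, carefully bookkeeping how the per-site errors accumulate under composition in the diamond norm. First I would fix a minimizing horizontal curve $C:[0,1]\to G$ with $C(0)=e$, $C(1)=U$, and length $L(C)=\dd_{CC}(e,U)$, parametrized proportionally to arclength so that $|c(t)|$ is constant and equal to $\dd_{CC}(e,U)$. Partitioning $[0,1]$ into $N$ equal subintervals, the segment over $[(k-1)/N,k/N]$ is approximated by a single exponential $U_k=\exp(\mathrm{i}\,\alpha_k g_k)$ where $g_k\in\mathcal{G}$ is (essentially) the normalized generator active on that segment and $\alpha_k$ is the corresponding arclength increment; since the control Hamiltonian is piecewise along the available generators, $\sum_k \alpha_k \approx \dd_{CC}(e,U)$ and each $\alpha_k = O(1/N)$. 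Here one may need a short lemma that a minimizing curve can be taken with controls that are piecewise constant (or that we may restrict to such "bang-bang"-type paths without changing the infimum up to $O(1/N)$ corrections), so that each segment really is a single Pauli rotation of small angle.

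Next I would invoke the single-site error bound from \cite[Theorem 1]{adhikaryCounterintuitiveEfficientRegimes2024} (equivalently Eq.~(\ref{scal}) in the form $\epsilon_k \le \alpha_k^2(\sigma^{-2}-1)$ after the angle rescaling $\alpha_k=\sigma\beta_k$ discussed around Eq.~(\ref{chan})), giving for each implemented channel $C_k(\beta_k)$ the estimate $\|C_k(\beta_k)-[U_k(\alpha_k)]\|_\diamond \le \alpha_k^2\,(\sigma^{-2}-1)$. Then the triangle inequality for the diamond norm together with its submultiplicativity / unitary invariance under composition (the standard telescoping bound $\|\prod_k C_k - \prod_k [U_k]\|_\diamond \le \sum_k \|C_k-[U_k]\|_\diamond$, using that each factor is a channel with diamond norm $1$) yields
\begin{equation}
\epsilon \;\le\; \sum_{k=1}^N \epsilon_k \;\le\; \left(\frac{1}{\sigma^2}-1\right)\sum_{k=1}^N \alpha_k^2 .
\end{equation}
Finally, since all $\alpha_k = O(1/N)$ and $\sum_k \alpha_k = \dd_{CC}(e,U) + O(1/N)$, Cauchy–Schwarz (or directly the near-equal-angle splitting, which is the optimal choice) gives $\sum_k \alpha_k^2 \le \tfrac{1}{N}\big(\sum_k\alpha_k\big)^2 = \tfrac{1}{N}\dd^2_{CC}(e,U) + O(N^{-2})$, which is exactly the claimed bound (\ref{Ebound}).

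The main obstacle I anticipate is the discretization step: controlling the error incurred by replacing the true geodesic segment $C((k-1)/N)^{-1}C(k/N)$ by a single Pauli exponential $U_k$. If the control Hamiltonian $h(t)$ varies continuously within a segment, the segment is a time-ordered exponential rather than $\exp(\mathrm{i}\alpha_k g_k)$, and the mismatch is $O(1/N^2)$ per segment (from the BCH/Magnus expansion), hence $O(1/N)$ in total — which would spoil the leading coefficient unless handled. Two ways around this: either argue that the infimum of length over piecewise-constant controls equals $\dd_{CC}$ (so we lose nothing by assuming $h$ is piecewise constant with at most $N$ pieces, absorbing the approximation into the $O(N^{-2})$ term as the partition refines), or keep general controls and absorb the time-ordering correction together with the commutator terms $[g_j,g_k]$ into $\mathcal{O}(N^{-2})$ by noting that these corrections are unitary (not noisy) and hence, being higher order, do not couple to the $(\sigma^{-2}-1)$ prefactor at leading order. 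A secondary, more technical point is verifying that the angle-rescaling $\alpha=\sigma\beta$ and the per-site channel form (\ref{chan}) compose correctly when the generators $g_k$ differ from site to site — but this is routine given that (\ref{chan}) is assumed to hold locally and the measurements are uncorrelated by hypothesis.
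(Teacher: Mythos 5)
Your overall architecture is the same as the paper's: a quadratic local error per site obtained from Eq.~(\ref{chan}) with the rescaling $\beta_i=\alpha_i/\sigma$ (Remark~\ref{rem:local_error}), a telescoping triangle-inequality bound $\epsilon\le\sum_k\epsilon_k$ using that each factor is a channel (Lemma~\ref{lem:error_est_rot_sequence}), a discretization of a minimizing horizontal curve, and the constant-speed/energy identity $\int_0^1|c(t)|^2\dd t=L(C)^2$ (Lemma~\ref{lemma:ErrorEnergy}). The genuine gap sits exactly where you suspect it: the replacement of geodesic segments by single-generator exponentials. Neither of your two workarounds closes it. Workaround (b) fails because a first-order splitting leaves an approximation error $\Vert\prod_k[U_k]-[U]\Vert_\diamond$ of order $1/N$; the fact that this error is coherent rather than noisy does not make it subleading, and it would add an $O(1/N)$ contribution to $\epsilon$ whose coefficient does not carry the factor $(\sigma^{-2}-1)$, so the claimed leading term would be wrong (in particular it would not vanish as $\sigma\to1$). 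The paper's resolution is a specific second-order scheme, the midpoint-symmetric Lie--Trotter--Suzuki formula (Def.~\ref{definition:FirstOrderLieTrotterSuzuki}), whose local error is $O(\Delta^3)$ (Lemma~\ref{lemma:localhigherordertrotter}) and whose global approximation error is therefore $O(N^{-2})$, genuinely subleading.

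Workaround (a) and your final step are also problematic as stated. If you force a single active generator per segment (``bang-bang'' controls), the length of such a path is $\int\sum_g|c_g|\dd t$ rather than $\int(\sum_g c_g^2)^{1/2}\dd t$, so the infimum over such paths is an $\ell^1$-type Carnot--Carath\'eodory distance that strictly exceeds $\dd_{CC}(e,U)$ whenever the minimizer needs several simultaneously nonzero controls --- the generic case, e.g.\ the $\opSU(2)$ geodesics of Example~\ref{example:SubRiemannianGeodesics}. Consequently $\sum_k\alpha_k\to\dd_{CC}(e,U)$ is false in general. Moreover, Cauchy--Schwarz gives $\sum_k\alpha_k^2\ge\tfrac1N\bigl(\sum_k\alpha_k\bigr)^2$, not $\le$; equality needs equal angles, which you cannot freely impose while matching the endpoint. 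The paper sidesteps both issues by never passing through $\sum_k\alpha_k$: it bounds $\sum_k\alpha_k^2$ directly as a Riemann sum of the energy, $\sum_k\alpha_k^2\le\tfrac1N\int_0^1|c(t)|^2\dd t+O(N^{-2})$ (Remark~\ref{remark:CurvesAndError}), and then uses that for a constant-speed minimizer the energy equals $\dd^2_{CC}(e,U)$. If you repair your argument, route it through the energy functional and a second-order product formula in the same way.
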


This theorem gives an asymptotic error bound for MBQC on resource states in computational phases of quantum matter. We give an implementation of $U$ on that resource state, taking a sub-Riemannian minimizing curve $C$ and approximating it, using the Lie-Trotter-Suzuki formula. From this perspective, finding the implementation with minimal error is a geometric problem.

The leading  order in the $1/N$ expansion of the error bound $\epsilon$ has a simple structure--it is the product of three factors. The first factor, $1/N$, describes the measurement procedure. Enlarging $N$ can make the error $\epsilon$ arbitrarily small, at the cost of spending computational resources $\sim N$. The second factor describes the resource state used. Its quality only depends on a single quantity, the order parameter $\sigma$. The closer $\sigma$ is to its maximum value of one, the smaller the logical error. The third factor describes the target unitary. It enters through its Carnot-Caratheodory distance from the identity.

The higher order terms $\mathcal{O}(N^{-2})$ also include the error made by approximating the geodesic with finite rotations.\smallskip

{\itshape Outline of the proof.} The critical ingredient in the proof is the connection between error and arc length, provided in Lemma~\ref{lemma:ErrorEnergy}. The proof consists of three steps. First we bound the error of an arbitrary rotation sequence using the triangle inequality (Lem. \ref{lem:error_est_rot_sequence}). 
$$ \epsilon \leq \sum_i \epsilon_i = \of{\frac{1}{\sigma^2} - 1} \sum_i \vof{\alpha_i}^2 + \mathcal{O}(\alpha^3). $$
Second, the error bound is related to the sub-Riemannian distance, as rotation sequences are a special case of horizontal curves (Rem. \ref{remark:CurvesAndError}).
As a third step, we take the Lie-Trotter-Suzuki formula (Def. \ref{definition:FirstOrderLieTrotterSuzuki}) to get a rotation sequence for a minimizing sub-Riemannian geodesic $C$ connecting $e$ and $U$. Finally, we join the three pieces together to get the error bound.

\section{Methods}
\label{sec:methods}

To prove the theorem, we start by calculating the local error. This is the contribution to the error coming from a single measurement. Because the channels are multiplicative, we can use the triangle inequality to get the total error from the local ones.

\begin{remark}[Local Error]
\label{rem:local_error}
    The local error $\epsilon_i$ is minimized by implementing $\beta_i = \frac{\alpha_i}{ \sigma}$ assuming that angles $\alpha_i$ are small. The local error is quadratic in the angle and gets smaller as $\sigma$ approaches $1$
    $$ \epsilon_i = \of{\frac{1}{\sigma^2} - 1} \alpha_i^2 + \mathcal{O}(\alpha_i^3). $$
\end{remark}

\begin{proof}
    We look at the error, that is the difference
    $$ \frac{1 + \sigma}{2} [U_i(\beta_i)] + \frac{1- \sigma}{2} [U_i(-\beta_i)] - [U_i(\alpha_i)]. $$
    Therefore we can expand the super operators $[\mathrm{e}^{\mathrm{i} \alpha g_i}]$ in orders of $\alpha = \alpha_i,\beta_i$, to get
    $$ 1 + \mathrm{i} \alpha [g_i,\cdot] + \frac{1}{2} \alpha^2 [g_i,\cdot]^2 + \mathcal{O}(\alpha^3).$$
    Here, $[g_i,\cdot]$ maps matrices $\rho \mapsto [g_i, \rho]$ to the commutator with $g_i$. Taking the difference, with $\beta_i = \frac{\alpha_i}{ \sigma}$, $\alpha_i^2$ is the leading term, with
    $$\frac{1}{2} \of{\frac{1}{\sigma^2} - 1} \alpha_i^2 \Vert [g_i,\cdot]^2\Vert_\diamond.$$
    The result follows universally for the diamond norm, because $\Vert [g_i,\cdot]^2 \Vert_\diamond = 1$. For other operator norms, like the trace norm, this constant can depend on the number of qubits.
\end{proof}

These local error bounds add up via the triangle inequality to get an approximate error bound.

\begin{lemma}[Error bounds for rotation sequences]
\label{lem:error_est_rot_sequence}
    The total error can be bounded by the sum of the local errors
    $$ \epsilon \leq \sum_i \epsilon_i = \of{\frac{1}{\sigma^2} - 1} \sum_i \alpha_i^2 + \mathcal{O}(\alpha^3). $$.
\end{lemma}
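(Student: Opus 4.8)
The plan is to prove the inequality $\epsilon \leq \sum_i \epsilon_i$ by a hybrid (telescoping) argument over the $N$ factors, and then to substitute the local estimate of Remark~\ref{rem:local_error} to read off the asymptotics. The multiplicativity assumption on the channels is precisely what makes this work.

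First I would interpolate between the implemented sequence $\prod_i C_i(\beta_i)$ and the ideal sequence $\prod_i [U_i(\alpha_i)]$ by the hybrid super operators
\[
P_k = C_1(\beta_1)\cdots C_k(\beta_k)\,[U_{k+1}(\alpha_{k+1})]\cdots[U_N(\alpha_N)], \qquad k = 0,\dots,N,
\]
so that $P_0 = \prod_i [U_i(\alpha_i)]$, $P_N = \prod_i C_i(\beta_i)$, and hence $\mathcal{E} = P_N - P_0 = \sum_{k=1}^N (P_k - P_{k-1})$. Each summand factorizes as $P_k - P_{k-1} = \of{C_1\cdots C_{k-1}}\of{C_k(\beta_k) - [U_k(\alpha_k)]}\of{[U_{k+1}]\cdots[U_N]}$, isolating the $k$-th discrepancy between a channel $C_k$ and its intended unitary $[U_k]$.

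Second, I would invoke the two standard properties of the diamond norm: it is submultiplicative under composition of super operators, and pre- or post-composition with a CPTP map does not increase it. Every $C_i(\beta_i)$ is a convex combination of unitary conjugations, hence CPTP, and every $[U_i(\alpha_i)]$ is a unitary conjugation, hence also CPTP; therefore $\Vert P_k - P_{k-1}\Vert_\diamond \leq \Vert C_k(\beta_k) - [U_k(\alpha_k)]\Vert_\diamond = \epsilon_k$. The triangle inequality then gives $\epsilon = \Vert\mathcal{E}\Vert_\diamond \leq \sum_{k=1}^N \epsilon_k$. Finally, substituting Remark~\ref{rem:local_error} at the optimal choice $\beta_k = \alpha_k/\sigma$ yields $\epsilon_k = \of{\frac{1}{\sigma^2}-1}\alpha_k^2 + \mathcal{O}(\alpha_k^3)$, and summing over the (here fixed) $N$ terms collects the remainders into a single $\mathcal{O}(\alpha^3)$, giving $\sum_k \epsilon_k = \of{\frac{1}{\sigma^2}-1}\sum_k \alpha_k^2 + \mathcal{O}(\alpha^3)$ as claimed.

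I do not expect a real obstacle in this lemma; the telescoping and the diamond-norm bounds are routine. The only point needing a little care is bookkeeping of the cubic remainders, and more importantly the observation that the constants hidden in $\mathcal{O}(\alpha_k^3)$ are uniform in $k$ and independent of $N$ (they come from the fixed super operator norm $\Vert[g_i,\cdot]^3\Vert_\diamond$ and its companions in the Taylor expansion). This uniformity is harmless here because $N$ is held fixed, but it is exactly what will be needed later, in the assembly of Theorem~\ref{thm:implemented_error}, when one sets $\alpha_k = \mathcal{O}(1/N)$ and lets $N\to\infty$.
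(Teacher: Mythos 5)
Your proposal is correct and is essentially the paper's own argument: the paper performs the same telescoping decomposition inductively (inserting the hybrid term $C_{N+1}\circ\prod_i [U_i]$ at each step), bounds each increment by $\epsilon_k$ using the triangle inequality together with submultiplicativity and the normalization $\Vert C_i\Vert_\diamond = \Vert [U_i]\Vert_\diamond = 1$, and then inserts Remark~\ref{rem:local_error}. Your explicit hybrid operators $P_k$ and the remark on uniformity of the cubic remainders are just a cleaner packaging of the same steps.
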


\begin{proof}
    We use induction, triangle inequality and product/composition inequality to get
    $$ \epsilon \leq \sum_{i=1}^N \epsilon_i. $$
    For $N=1$: $\epsilon = \epsilon_1$. From $N$ to $N+1$:
    \begin{equation} \begin{split}
    \epsilon(N+1) & = \nof{ C_{N+1} \circ \prod_{i=1}^{N} C_i - [U_{N+1}] \circ \prod_{i=1}^{N} [U_i] } \\
    & \leq \nof{C_{N+1} \circ \prod_{i=1}^{N} C_i - C_{N+1} \circ \prod_{i=1}^{N} [U_i]} \\
    & + \nof{C_{N+1} \circ \prod_{i=1}^{N} [U_i] - [U_{N+1}] \circ \prod_{i=1}^{N} [U_i]} \\
    & \leq \nof{C_{N+1}} \nof{\prod_{i=1}^{N} C_i - \prod_{i=1}^{N} [U_i]} \\
    & + \nof{C_{N+1} - [U_{N+1}]} \nof{\prod_{i=1}^{N} [U_i]} \\
    & \leq \epsilon(N) + \epsilon_{N+1}
    \end{split}
    \end{equation}
    This estimation is independent of the chosen norm except for the last step, where we impose a normalization $\Vert C_i \Vert = \Vert [U_i] \Vert = 1$. Finally the local error (Rem. \ref{rem:local_error}) can be inserted.
\end{proof}

We also numerically confirmed the error scaling for low-dimensional examples.

\begin{remark}[Numerical Observation]
\label{rem:num_observation}
    Without a lower bound on the implementation error, we cannot conclude that implementation of the geodesic gives the optimal error scaling. In numerical experiments for the diamond norm, we see that this bound is tight (For example Fig. \ref{fig:error_scaling_diamond}). We were however unable to prove that the same expression asymptoticalliy is also a lower bound for the diamond norm of the total error.
\end{remark}

\begin{figure}[hbt!]
    \centering
    \input{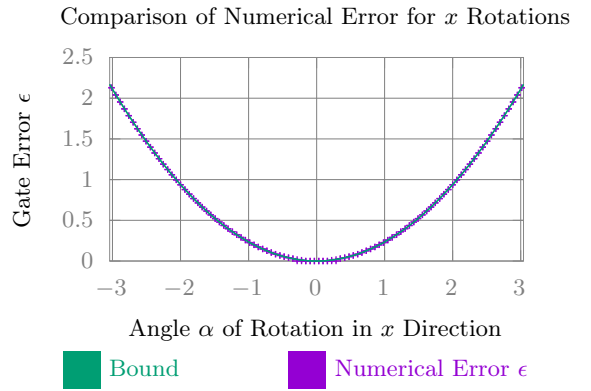}
    \begin{tikzpicture}
        \fill[gnuplot_green] (0,0) rectangle (0.5,0.5) node[below right] {Bound};
        \fill[gnuplot_purple] (3,0) rectangle (3.5,0.5) node[below right] {Numerical Error $\epsilon$};
    \end{tikzpicture}
    \caption{Numerical example for confirmation of the error scaling for a Pauli $X$ rotation. We compute the error numerically (purple) and using our bound (Lemma \ref{lem:error_est_rot_sequence}) (green). As mentioned in the text, the bound is tight for all numerical experiments using also different executable generators. Using the Frobenius or the Trace Norm instead, the bound is not tight for combinations of different generators. Parameters: $\sigma = 0.9$, $N = 500$}
    \label{fig:error_scaling_diamond}
\end{figure}

To find a rotation sequence, that has a low error, we make use of a minimizing horizontal curve in the special unitary group $\opSU\of{2^n}$ which starts at the identity, ends at the desired unitary and satisfies an additional constraint on the instantaneous rotation axis. We first show, that rotation sequences can be interpreted as such curves.

The contribution of the rotation sequence in the total error $\sum_{i=1}^N \alpha_i^2$ can be seen as a discrete version of the integral $\int_0^1 \sum_g c_g(t)^2 \dd t$, which resembles the sub-Riemannian energy (Lemma \ref{lemma:ErrorEnergy}).

\begin{remark}[Curves and Error]
    \label{remark:CurvesAndError}
    The executed gate $U$ from the rotation sequence $\alpha_i,g_i$
    $$U = \prod_{i=1}^N \mathrm{e}^{\mathrm{i} \alpha_i g_i} $$
    is the end-point of a piecewise-one-parameter curve $C: [0,N] \to \mathrm{SU}(2^n)$.
    So for every rotation, we get a piece of the curve $t \in (i-1,i]$:
    $$C(t) = \prod_{1 < j < i} \mathrm{e}^{\mathrm{i} \alpha_j g_j} \mathrm{e}^{\mathrm{i} (t - i + 1)\alpha_i g_i}. $$
    Moreover the piecewise-constant control
    $$ c(t) = \mathrm{i}\, \alpha_i g_i, \quad t \in (i-1,i] $$
    directly give rise to the following relation (substituting $t \to t \cdot N$)
    $$ \epsilon \sim \sum_i \alpha_i^2 = \frac{1}{N} \int_0^1 \vert c(t\cdot N) \vert^2 \dd t + \mathcal{O}(N^{-2}). $$
\end{remark}

Therefore, we conclude, that in order to have a good upper bound for the error, a small constant for the $\frac{1}{N}$ term is necessary. Since $N$ will be sufficiently large, only the first order coefficient is influenced by the choice of rotation sequence.
$$ \lim_{N\to \infty} N \cdot \epsilon(N) = \of{\frac{1}{\sigma^2} -1 } \int_0^1 \vert c(t) \vert^2 \mathrm{d} t. $$
By the following lemma we know that minimizing this scaling factor is equivalent to minimizing the sub-Riemannian arclength: 

\begin{lemma}[Error and Energy]
    \label{lemma:ErrorEnergy}
	Sub-Riemannian geodesics minimize the energy functional \cite[262]{agrachev2013control}.
    \begin{equation}
        E(C) = \int_0^1 \vert c(t) \vert^2 \dd t.
    \end{equation}
    The minimum is given by $E = T^2$ for $T = L(C)$ the length of $C$.
\end{lemma}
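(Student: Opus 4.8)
The plan is to reduce the lemma to the Cauchy--Schwarz inequality between the $L^1$- and $L^2$-norms of the speed function $t\mapsto\vert c(t)\vert$ on $[0,1]$, together with the reparametrization invariance of the sub-Riemannian length. First I would note that for \emph{any} horizontal curve $C:[0,1]\to G$ with control $c$,
$$ L(C)^2 = \of{\int_0^1 \vert c(t)\vert\,\dd t}^2 \leq \int_0^1 1\,\dd t\cdot\int_0^1\vert c(t)\vert^2\,\dd t = E(C), $$
with equality precisely when $\vert c(t)\vert$ is almost everywhere constant, that is, when $C$ has constant speed. Since $\dd_{CC}(e,U)$ is the infimal length over connecting horizontal curves, this already gives $E(C)\geq L(C)^2\geq \dd_{CC}(e,U)^2$ for every such curve.

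Second, I would show the value $T^2$ is attained. Sub-Riemannian length is invariant under monotone reparametrization: if $\phi:[0,1]\to[0,1]$ is absolutely continuous, nondecreasing and onto, then $\tilde C:=C\circ\phi$ obeys $\dot{\tilde C}(t)=\tilde C(t)\,c(\phi(t))\phi'(t)$, so $\tilde C$ is again horizontal with control $\tilde c(t)=c(\phi(t))\phi'(t)\in\opSpan(\text{i}\mathcal{G})$, and the change-of-variables formula gives $L(\tilde C)=L(C)$. Now take $C$ to be a minimizing geodesic connecting $e$ and $U$ (such a curve exists by the remark after Definition~\ref{definition:SubRiemannianGeodesic}), put $T=L(C)=\dd_{CC}(e,U)$, and reparametrize it by a regularization of its normalized arc-length function; this produces a connecting horizontal curve $\tilde C$ with $\vert\tilde c(t)\vert\equiv T$, so that $E(\tilde C)=\int_0^1 T^2\,\dd t=T^2$. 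Together with the first step this gives $\min_C E(C)=T^2$, with the minimizers exactly the constant-speed minimizing curves---which is the assertion of the lemma.

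The one delicate point is the reparametrization step: one must check that the arc-length reparametrization of an a priori merely absolutely continuous horizontal curve stays admissible (absolutely continuous, horizontal, with controls in $\opSpan(\text{i}\mathcal{G})$), and deal with the case where $\vert c(t)\vert$ vanishes on a set of positive measure---such ``rest'' intervals contribute nothing to $L$ but strictly increase $E$, so they can be excised before reparametrizing. This is standard material in sub-Riemannian geometry, so I would invoke it by reference (see \cite{agrachev2013control}) rather than reprove it; the Cauchy--Schwarz half of the argument is immediate.
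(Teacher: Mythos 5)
Your proof is correct and follows essentially the same route as the paper's: the Cauchy--Schwarz inequality between the $L^1$- and $L^2$-norms of $\vert c(t)\vert$, with equality attained by the constant-speed (arc-length) reparametrization of a minimizing curve. The only difference is cosmetic---the paper normalizes to unit speed on $[0,T]$ while you normalize to speed $T$ on $[0,1]$---and your added care about reparametrization admissibility and ``rest'' intervals makes the argument more complete than the paper's terse version.
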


This step directly relates the sub-Riemannian problem of minimizing some functional to the physical problem of minimizing the error of a MBQC scheme. If the rotation angles could be chosen continuously, a curve minimizing the arclength would be the desired rotation sequence. If we were further able to prove a lower bound for the error, this would imply that the minimization problems are equivalent.

\begin{proof}
    Reparameterize $\vert c(t) \vert = 1$ by arclength. Since $E$ is finite, $\vert c \vert$ is square integrable.
    $$ T^2 = \left(\int_0^T \vert c(t) \vert \dd t\right)^2 = T \int_0^T \vert c(t) \vert^2 \dd t. $$
    Here, we applied the Cauchy-Schwarz inequality with $g(t) = 1$. Moreover, since $\vert c(t) \vert$ is constant, equality holds. Therefore minimizing the arclength (the square $T^2$) is equivalent to minimizing the leading $\frac{1}{N}$ coefficient of the error bound. 
\end{proof}

Since the control cannot be applied continuosly, we approximate the geodesics for a good error scaling using a Trotter-like product formula for the pieces of the geodesic. These need to vanish faster in terms of $N$, so they can be neglected compared to the implementation error.

There are some choices for higher-order Trotter like product formulae, which also include the commutator and higher-order error \cite[5]{Chen_2022}. One can also use \cite{strichartz1987campbell} for the time-ordered exponential, or other splitting techniques \cite{casasApproximatingExponentialsCommutators2025}.

We adapt a scheme from \cite{wiebe2010higher} to get a local approximation of third order. This results in a global approximation error of second order, which is sufficient, because the error in the implementation is of first order, so it is dominant.
We recall from \cite{wiebe2010higher}, Page 5 the following definition.

\begin{definition}[First order Lie-Trotter-Suzuki product formula] 
\label{definition:FirstOrderLieTrotterSuzuki}

Let $c$ be a time-dependent control
$$ c\of{t} = \mathrm{i} \sum_{g \in \mathcal{G}} c_g\of{t} g. $$
We label the generators $g_i \in \mathcal{G}$ with $i = 1, \dots, m$. Then its first order Lie-Trotter-Suzuki formula is
\begin{equation}
\begin{split}
    \label{eq:firstorderLieTrotterSuzuki}
    & U_1\of{t + \Delta,t} \equiv \\
    & \prod_{i=1}^m \exp \left[\mathrm{i } c_i\of{t + \frac{\Delta}{2}}\frac{\Delta}{2}\right] \prod_{i=m}^1 \exp \left[\mathrm{i } c_i\of{t + \frac{\Delta}{2}}\frac{\Delta}{2}\right.
\end{split}
\end{equation}
\end{definition}

In figure \ref{fig:geodesics} we compare the resulting curve for the rotation sequence generated by the Lie-Trotter-Suzuki formula to the original geodesic. The local approximation order (Lemma \ref{lemma:localhigherordertrotter}) is also given in \cite{wiebe2010higher}, Page 9.

\begin{figure*}[hbt]
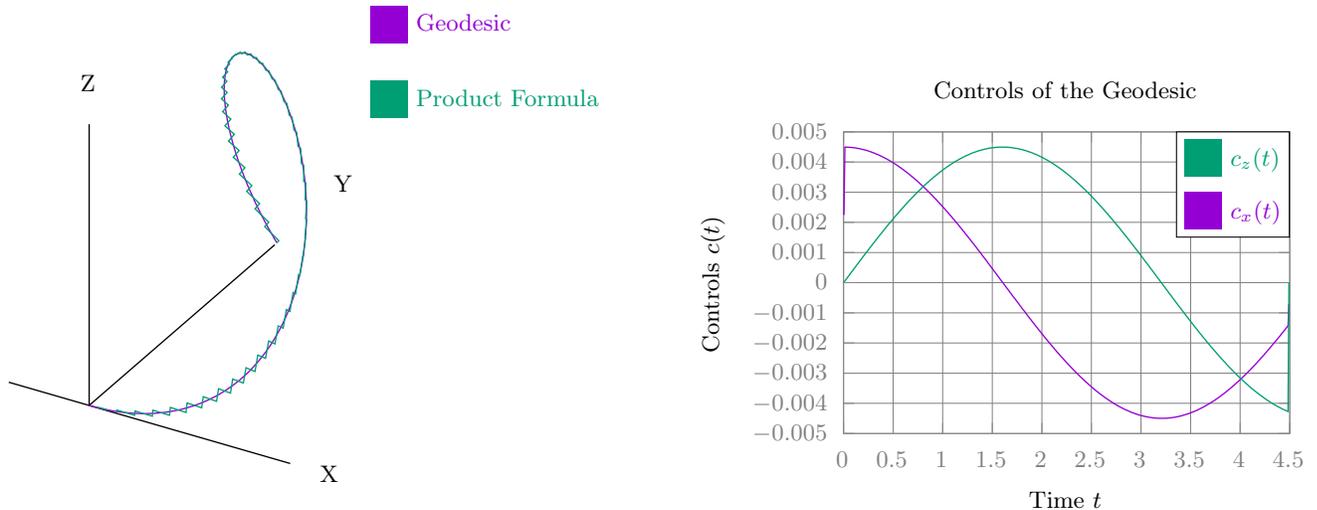

    \begin{minipage}[t]{0.45\linewidth}
    \hspace{-1.5cm}
    \input{Figures/gnuplot_trotter_y.tex}
    \begin{tikzpicture}[overlay]
    \begin{scope}[xshift=-1.76cm,yshift=5.36cm]
        \fill[gnuplot_green] (0,0) rectangle (0.5,0.5) node[below right] {Product Formula};
        \fill[gnuplot_purple] (0,1) rectangle (0.5,1.5) node[below right] {Geodesic};
    \end{scope}
    \end{tikzpicture}
    \end{minipage}
    \hfill
    \begin{minipage}[t]{0.45\linewidth}
    \input{Figures/gnuplot_controls.tex}
    \begin{tikzpicture}[overlay]
    \begin{scope}[xshift=2.55cm,yshift=4.95cm]
        \draw[fill=white] (-0.1,0.6) rectangle (1.4,-0.8);
        \fill[gnuplot_green] (0,0) rectangle (0.5,0.5) node[below right] {$c_z(t)$};
        \fill[gnuplot_purple] (0,-0.7) rectangle (0.5,-0.2) node[below right] {$c_x(t)$};
    \end{scope}
    \end{tikzpicture}
    \end{minipage}
	\caption{The sub-Riemannian geodesic (Example \ref{example:SubRiemannianGeodesics}) for a rotation in the $y$ direction (Purple). For that curve we compare the Lie-Trotter-Suzuki product formula (Green), which is a curve with piecewise constant control. To the right, the control functions $c_z(t)$ (purple) and $c_x(t)$ (green) are depicted.}
    \label{fig:geodesics}
\end{figure*}

\begin{lemma}[Local approximation order]
    \label{lemma:localhigherordertrotter}
    Let $U = \cT\exp\eof{\int\dd t\, c\of{t}}$ be the unitary generated by the continuous time-dependent control $c\of{t}$.
    The first-order Lie-Trotter-Suzuki product formula \eqref{eq:firstorderLieTrotterSuzuki} has a local error of third order. In other words $\nof{U - U_1} \in \mathcal{O}\of{\Delta^3}$.
\end{lemma}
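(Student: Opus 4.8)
The plan is to show that both the exact propagator $U=\cT\exp\eof{\int_t^{t+\Delta}c(s)\dd s}$ and the product formula $U_1=U_1(t+\Delta,t)$ agree, up to and including order $\Delta^2$, with the single midpoint exponential $M\mDef\exp\of{\Delta\,\bar c}$, where $\bar c\mDef c(t+\Delta/2)$. The triangle inequality $\nof{U-U_1}\le\nof{U-M}+\nof{M-U_1}$ then yields $\nof{U-U_1}\in\mathcal{O}(\Delta^3)$. Implicit here is that $c$ is $C^2$ on $[t,t+\Delta]$ (which holds for the smooth geodesic controls in our application); for merely continuous $c$ the symmetry cancellations below give only $o(\Delta^2)$.

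First I would expand the exact propagator. Writing $c(s)=\bar c+\of{s-t-\tfrac\Delta2}\dot c\of{t+\tfrac\Delta2}+\mathcal{O}(\Delta^2)$ and inserting it into the Dyson series
$$U=1+\int_t^{t+\Delta}c(s_1)\dd s_1+\int_t^{t+\Delta}\dd s_1\int_t^{s_1}\dd s_2\,c(s_1)c(s_2)+\mathcal{O}(\Delta^3),$$
the first integral equals $\Delta\bar c+\mathcal{O}(\Delta^3)$, because the term linear in $s-t-\tfrac\Delta2$ integrates to zero over the symmetric interval; the double integral equals $\tfrac{\Delta^2}{2}\bar c^2+\mathcal{O}(\Delta^3)$, since its domain is a triangle of area $\Delta^2/2$ on which $c(s_1)c(s_2)=\bar c^2+\mathcal{O}(\Delta)$; and the remaining Dyson terms are $\mathcal{O}(\Delta^3)$ because $c$ is bounded. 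Hence $U=1+\Delta\bar c+\tfrac{\Delta^2}{2}\bar c^2+\mathcal{O}(\Delta^3)=M+\mathcal{O}(\Delta^3)$; note that commutator corrections to the midpoint rule first appear at order $\Delta^3$.

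Next I would expand the product formula. Set $a_i\mDef c_i(t+\tfrac\Delta2)\tfrac\Delta2$, each of order $\Delta$, so that $2\sum_{i=1}^m a_i=\Delta\sum_i c_i(t+\tfrac\Delta2)=\Delta\bar c$. Expanding the palindromic product $U_1=e^{a_1}e^{a_2}\cdots e^{a_m}\,e^{a_m}\cdots e^{a_2}e^{a_1}$ to second order, the first-order part is $2\sum_i a_i=\Delta\bar c$, while the second-order part is the sum of three pieces: $\sum_{i<j}a_ia_j+\tfrac12\sum_i a_i^2$ from the left block, $\sum_{i>j}a_ia_j+\tfrac12\sum_i a_i^2$ from the right block, and $\of{\sum_i a_i}^2$ from pairing one first-order factor out of each block. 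These add up to $2\of{\sum_i a_i}^2=\tfrac12(\Delta\bar c)^2$. Therefore $U_1=1+\Delta\bar c+\tfrac{\Delta^2}{2}\bar c^2+\mathcal{O}(\Delta^3)=M+\mathcal{O}(\Delta^3)$, and combining with the previous paragraph gives the claim.

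The main obstacle --- really the only step that is not bookkeeping --- is the reorganization of the second-order terms of the palindromic product: one must verify that the symmetric ordering exactly cancels the $\mathcal{O}(\Delta^2)$ error that the one-sided product $e^{a_1}\cdots e^{a_m}$ would commit, so that the quadratic terms collapse to $\tfrac12\of{\sum_i a_i}^2$. This is the combinatorial core of the Suzuki construction and follows \cite[9]{wiebe2010higher}. Should an explicit constant be needed downstream, one simply propagates $\sup$-norms of $c$, $\dot c$, $\ddot c$ on $[t,t+\Delta]$ through the two expansions.
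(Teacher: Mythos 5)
Your proof is correct and follows essentially the same route as the paper: both arguments show that the exact propagator and the product formula each agree with the midpoint exponential $\exp\of{c\of{t+\Delta/2}\Delta}$ through order $\Delta^2$ and conclude by the triangle inequality. The only difference is one of detail --- you expand the exact evolution via the Dyson series about the midpoint and explicitly verify the palindromic cancellation of the second-order terms, whereas the paper expands about $t$ and outsources both steps to the cited lemmas of Wiebe et al.
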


This lemma is important to argue that going to the domain of continuous curves and approximating them does not drastically change the metric of interest, which is of first order in $N^{-1}$.

\begin{proof}
    We start Taylor expanding both parts, the exact rotation using \cite[Lemma 1, Lemma 7]{wiebe2010higher}
    \begin{equation}
        U\of{t + \Delta,t} = 1 + c\of{t} \Delta + \eof{c^2\of{t} + c'\of{t}} \Delta^2/2 + \mathcal{O}\of{\Delta^3},
    \end{equation}
    and the approximation by $c\of{t} \equiv c\of{t + \Delta/2}\Delta$. There, using the expansion of $c\of{t + \Delta/2} \approx c\of{t} + c'\of{t}\Delta/2$
    \begin{equation}
    \begin{split}
        & \exp\of{c\of{t+\Delta/2}\Delta} \\
        & = 1 + c\of{t + \Delta/2} \Delta + c^2\of{t + \Delta/2} \Delta^2/2 + \mathcal{O}\of{\Delta^3} \\
        & = 1 + c\of{t} \Delta + \eof{c^2\of{t} + c' \of{t}} \Delta^2/2 + \mathcal{O}\of{\Delta^3}
    \end{split}
    \end{equation}
    The Lie-Trotter-Suzuki product formula gives rise to the same in second order
    \begin{equation}
    \begin{split}
        & U_1\of{t + \Delta,t} \\
        & = 1 + c\of{t + \Delta/2}\Delta + c^2\of{t + \Delta/2} \Delta^2/2 + \mathcal{O}\of{\Delta^3}
    \end{split}
    \end{equation}
    So for any operator norm, due to homogeneity and triangle inequality: $\nof{U_1 - U} \in \mathcal{O}\of{\Delta^3}$.
\end{proof}

We summarize this, stating how a unitary $U$ is implemented on a resource state with $N\cdot \Delta$ qubits.

\begin{definition}[Curve Implementation]
\label{def:CurveImplementation}
    Let $U\in \mathrm{SU}(2^n)$ the target unitary and let $C(t)$ be a minimizing geodesic with control $c(t)$ such that $C(T) = U$. Split the geodesic into $M$ pieces with $M = \lfloor\frac{N}{(2m-1)}\rfloor$. Then we get a rotation sequence containing less than $N$ individual rotations with angles of order $\mathcal{O}(N^{-1})$, according to (Eq. \ref{eq:firstorderLieTrotterSuzuki}).
\end{definition}

Using this definition, we can prove our main theorem (Thm. \ref{thm:implemented_error}).

\begin{proof}[Proof of Theorem \ref{thm:implemented_error}]
    All the generators in the rotation sequence are executable $c_g(t_i) g$, because we use a horizontal curve $c(t) \in \mathrm{i}\;\mathrm{span} \,\mathcal{G}$. \\
    The total error is bounded by the implementation error and the approximation error
    $$ \nof{\prod_{i=1}^N C_i - [U]} \leq \nof{\prod_{i=1}^N C_i - \prod_{i=1}^N [U_i]} + \nof{\prod_{i=1}^N [U_i] - U}$$
    The approximation error is of order $N^{-2}$ (using triangle inequality):
    $$ \nof{ \prod_i \mathrm{e}^{\mathrm{i} \alpha_i g_i} - U} \leq \sum_i \mathcal{O} (N^{-3}) \in \mathcal{O}(N^{-2}). $$
    For the implementation error we have similar to numerical integration up to second order $\mathcal{O}(N^{-2})$,
    $$\sum_i \vert\alpha_i\vert^2 \leq \frac{1}{N}\int_0^1 \vert c(t) \vert^2 \mathrm{d} t = \frac{1}{N} \dd^2_{\mathrm{CC}}(e,U). $$
    With the last equality holding for a minimizing curve.
\end{proof}

\begin{figure}[hbt!]
    \centering
    \input{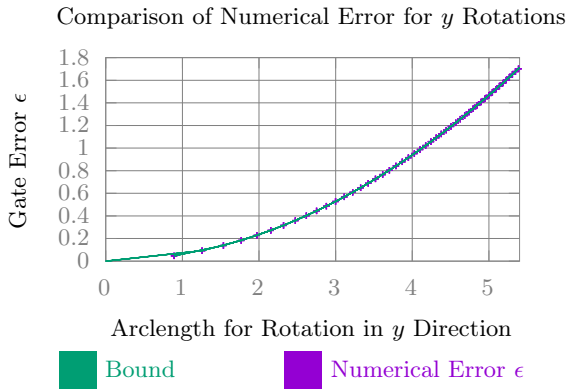}
    \begin{tikzpicture}
        \fill[gnuplot_green] (0,0) rectangle (0.5,0.5) node[below right] {Bound};
        \fill[gnuplot_purple] (3,0) rectangle (3.5,0.5) node[below right] {Numerical Error $\epsilon$};
    \end{tikzpicture}
    \caption{We compare the error norm $\epsilon$ for our implementation of a $y$ rotation (purple) to the theoretical upper limit (green) given by theorem \ref{thm:implemented_error}. We see that the error satisfies the bound. Parameters: $\sigma = 0.9$, $N = 500$}
\end{figure}

\subsection{Scaling}
\label{sec:scaling}

Since computing the sub-Riemannian distance relies on finding shortest geodesics, the computation of the distance and some optimal control is hard.
The advantage of the following $\infty$-distance in comparison to the Carnot-Carathéodory (CC) distance is, that it can be calculated more easily for arbitrary unitaries. Basically only the degree of the used generators is of importance.
Let $X_i$ an orthonormal basis of the Lie algebra. The degree $\deg X_i$ of $X_i$, i.e. the number of commutators needed to generate $X_i$ from $\mathcal{G}$.
$$ y = \exp\left(\sum_i y_i X_i\right)$$
then $\infty$-distance to the identity is
$$d_\infty(e,y) = \max_{i=1,\dots,N} \left\{\vert y_i\vert^{\frac{1}{\deg X_i}}\right\}.$$

The box-ball theorem says that sub-Riemannian metrics are bi-Lipschitz equivalent \cite[126]{nagelBallsMetricsDefined1985}, in the following way:
There are constants $0 < C_1 \leq C_2 < \infty$ such that for all $U \in G$,
$$ C_1 \dd_\infty(e,U) \leq \dd_{\mathrm{CC}}(e,U) \leq C_2 \dd_\infty(e,U). $$
So for the scaling of the sub-Riemannian distance we only need to know the scaling of the $\infty$ metric. The inequality for the scaling $\dd_{\mathrm{CC}} \leq C \vert y_i \vert^\frac{1}{N}$ can be shown using the BCH formula and is sometimes stated as a consequence or a by-product of Chows theorem for reachability \cite[p. 28]{rovelliChowTheoremStructure2024}.

The downside of the result is that the constants $C_1$ and $C_2$ need not be easy to determine or even small. Moreover, they will also depend on the dimension of the Lie group and the executable generators. However, for small angles the bound gets more and more tight. For prominent sub-Riemannian geometries such as the Heisenberg group, explicit constants can be given \cite{bellaicheTangentSpaceSubRiemannian1997}.

Another way to think about upper bounds is by constructing other curves that implement unitaries with the same restriction. One could implement arbitrary unitaries using a generalization of Euler angles. In \cite{smithOptimallyGeneratingSu2N2024}, they find rotation sequences for exponentials of single Paulis with length $t+(\mathrm{deg} X_i -1) \pi$. They also use that in MBQC $\frac{\pi}{2}$ rotations by Pauli operators can be done by classical post-processing.
$$ \mathrm{e}^{\mathrm{i}\frac{\pi}{2} P_1} \dots \mathrm{e}^{\mathrm{i}\frac{\pi}{2} P_{L-1}} \mathrm{e}^{\mathrm{i} t P_L} \mathrm{e}^{-\mathrm{i}\frac{\pi}{2} P_{L-1}} \dots \mathrm{e}^{-\mathrm{i}\frac{\pi}{2} P_1}. $$
Implementing that sequence gives long sequences of a single type of rotation and introduces an additive overhead to implement that sequence for every single Pauli generator appearing in the unitary.
That means, that especially for unitaries close to the identity, the scaling argues in favor of the sub-Riemannian minimizing curves.
In figure \ref{fig:box-ball-su2} we see as an example that even for a single generator the Euler angles are inefficient compared to using geodesic.
With regards to the executable generators we come to the same conclusion as \cite{smithOptimallyGeneratingSu2N2024}:
Optimal generation of the Lie algebra using commutators is beneficial for optimal generation of unitaries using only executable generators. This is due to the scaling of the length of both measures with the degree of the generators.

\subsection{Simplest Example}
\label{sec:example}

In general, the problem of finding sub-Riemannian geodesic is not easy to solve.
For the case of $\opSU\of{2}$, there is an example problem that is given in \cite[263]{agrachev2013control} with the following curves:

\begin{example}[Sub-Riemannian geodesics]\label{example:SubRiemannianGeodesics}
    Compare \cite[263-265]{agrachev2013control}.
    Consider $V$ as the span of $\mIm/2 \sigma_x, \mIm/2 \sigma_z$, the control parametrized as $c\of{t} = \mIm/2\of{c_x\of{t} \sigma_x + c_z\of{t} \sigma_z}$.
    The sub-Riemannian geodesics have time-dependent coefficients $c\of{t}$ of the form
    \begin{equation}
    \begin{split}
        \label{eq:rotAxis}
    	& c:\eof{0,T} \to \bR^2; \\
        & \vec{c}\of{t} \stackrel{\cdot}{=} \twoMat{\cos\phi_0}{\sin\phi_0}{-\sin\phi_0}{\cos\phi_0}\twoArr{\cos \beta t}{\sin \beta t} = \twoArr{\cos \beta t + \phi_0}{\sin \beta t + \phi_0}.
    \end{split}
    \end{equation}
    with $\phi_0,\beta$ and $T$ constants to be determined by setting equal the curve at time $T$ to the target unitary $U$
    \begin{equation}
    \begin{split}
    	& U \tbE C\of{T} \\
        & = \exp\eof{\frac{\mIm}{2} T \of{\cos\phi_0 \sigma_x + \sin\phi_0 \sigma_z + \beta \sigma_y}} \exp\eof{-\frac{\mIm}{2} T \beta \sigma_y}.
    \end{split}
    \end{equation}
\end{example}

Given the rotation to implement, the parameters of the geodesic can be found. For most cases, this involves solving a nonlinear equation for $\beta$. In (Fig \ref{fig:geodesics}) there are examples for a geodesic with a unitary $\mathrm{e}^{\mIm\alpha/2 \sigma_y}$.

Finally, we give examples of the bounds on the sub-Riemannian distance on $\opSU(2)$ for the comparison of the different implementations (Fig. \ref{fig:box-ball-su2}). We see, that the decomposition into Euler angles gives a larger arclength than the sub-Riemannian geodesics. Moreover, we see that the $\sqrt{\alpha}$ scaling from the ball-box theorem gives good bounds.

\begin{figure}[hbt!]
    \centering
    \input{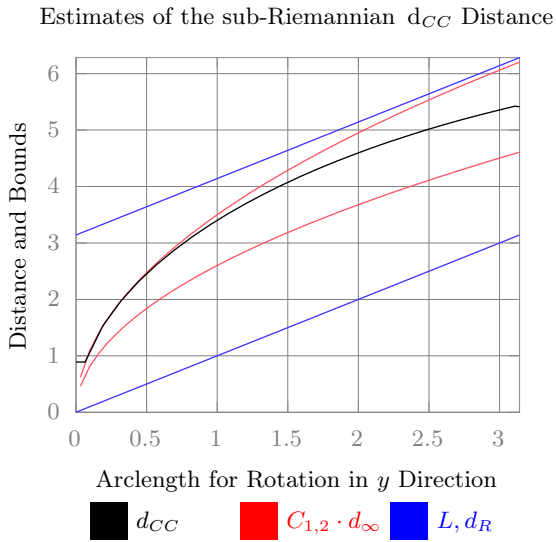}
    \begin{tikzpicture}
        \fill[black] (0,0) rectangle (0.5,0.5) node[below right] {$d_{\mathrm{CC}}$};
        \fill[red] (2,0) rectangle (2.5,0.5) node[below right] {$C_{1,2}\cdot d_{\infty}$};
        \fill[blue] (4,0) rectangle (4.5,0.5) node[below right] {$L,d_{R}$};
    \end{tikzpicture}
    \caption{We compare the different Bounds for a $Y$-rotation in the $\mathrm{SU}(2)$ with $X$ and $Z$ as horizontal directions. Riemannian distance or arclength and scaling for the Euler angles (Blue). The Sub-Riemannian distance that is important to our error bound (Black). Bounds with $d_\infty$ given by the box-ball theorem (Red).}
    \label{fig:box-ball-su2}
\end{figure}

\section{Conclusion and Outlook}
\label{sec:concl}

This work establishes that sub-Riemannian minimizing curves provide a principled and effective method for realizing quantum algorithms with rigorously bounded logical error in measurement-based quantum computation, specifically in the regime of computational phases of quantum matter. By relating computational efficiency directly to the geometry of the executable generators, we uncover a profound structural connection between symmetry-constrained quantum operations and differential geometry. The resulting error scaling outperforms traditional strategies such as Euler angle decompositions.

 The presented methodology applies naturally to more complex Lie groups and richer MBQC architectures. In particular, it applies to multi-generator gate sets governed by subsystem symmetries in 2D and higher-dimensional systems. This contains the 2600 computational phases identified in \cite{herringerClassificationMeasurementbasedQuantum2023}. These examples, with many more to 
 be enumerated and analysed, and perhaps even fully classified, form a rich and varied phenomenology to which a future geometric theory of computational efficiency in quantum phases of matter can refer.

Numerical methods play a complementary role in exploring this broader landscape. For initial-value problems, geodesics can be obtained by solving Hamilton’s equations. For boundary-value formulations, one may use Monte Carlo sampling \cite{dasilvaMonteCarloApproach2025} or the Krotov method \cite{goerzKrotovPythonImplementation2019}, though both are computationally demanding. A more scalable alternative is the fast marching algorithm \cite{sanguinettiSubRiemannianFastMarching2015,craneSurveyAlgorithmsGeodesic2020}, which offers efficient approximations to minimal curves and could serve as a practical tool for large-scale optimization in MBQC contexts.

Several questions remain open. First, can an analogous error minimization framework be developed for the correlated regime of MBQC ~\cite{adhikaryCounterintuitiveEfficientRegimes2024,Adhikary_2021}, where operations are not independent and inter-site correlations play a critical role?--- Ref. ~\cite{raussendorfMeasurementbasedQuantumComputation2023} provided the first scaling relation of the type of Eq.~(\ref{Ebound}), but under the limitations of applicability (a) to only the uncorrelated regime, and (b) to only single generators. Ref.~\cite{adhikaryCounterintuitiveEfficientRegimes2024} lifts restriction (a), and the present work lifts restriction (b). It is desirable to have a scaling result that lifts both restrictions simultaneously.

Second, in all scenarios we analyzed numerically, in the large-$N$ limit, our error bound Eq.~(\ref{Ebound}) not only bounds but equals the actual error. This prompts the question: Can a lower bound on $\epsilon$ be established that matches the upper bound at leading order $1/N$?
This would establish the asymptotic optimality of the geodesic construction. Proving such a bound would formally equate the sub-Riemannian minimization problem with the most efficient possible protocol under physical constraints.

The broadest and arguably richest question in the present context is after the geometric theory of computational efficiency of MBQC. It would answer questions like: Which resource states lead to the most efficient MBQC-implementations of quantum algorithms? For which quantum algorithms and subroutines do the MBQC-native gate sets  provide an advantage?---Ref.~\cite{stephenUniversalMeasurementBasedQuantum2024} has hinted that significant gains can be encountered. The present work provides a foundation for an efficiency analysis of MBQC, for the continuous case of computational phases of quantum matter.

\begin{acknowledgments}
    LH acknowledges helpful discussions with Tom Tischel, Paul Herringer, Wolfram Bauer, Hendrik P. Nautrup and Vir B. Bulchandani.
    This work was supported by the Alexander von Humboldt Stiftung.
\end{acknowledgments}

\section*{Data Availability Statement}
The depicted data as well as the source code are publicly available on GitHub: \cite{hantzkoSimulationGeodesics2025}.

\bibliographystyle{apsrev4-2}
\bibliography{literature.bib}

\onecolumngrid

\end{document}